\begin{document}
%
\title{On Age of Information for Discrete Time Status Updating System With Ber/G/1/1 Queues}


\author{\IEEEauthorblockN{Jixiang Zhang and Yinfei Xu}
\IEEEauthorblockA{School of Information Science and Engineering\\
Southeast University, Nanjing 210096, China}
\IEEEauthorblockA{Email: \{230179077, yinfeixu\}@seu.edu.cn}
}


%


\maketitle

\newtheorem{Definition}{Definition}
\newtheorem{Theorem}{Theorem}
\newtheorem{Lemma}{Lemma}
\newtheorem{Corollary}{Corollary}

\begin{abstract}
In this paper, we consider the age of information (AoI) of a discrete time status updating system, focusing on finding the stationary AoI distribution assuming that the Ber/G/1/1 queue is used. Following the standard queueing theory, we show that by invoking a two-dimensional state vector which tracks the AoI and packet age in system simultaneously, the stationary AoI distribution can be derived by analyzing the steady state of the constituted two-dimensional stochastic process. We give the general formula of the AoI distribution and calculate the explicit expression when the service time is also geometrically distributed. The discrete and continuous AoI are compared, we depict the mean of discrete AoI and that of continuous time AoI for system with M/M/1/1 queue. Although the stationary AoI distribution of some continuous time single-server system has been determined before, in this paper, we shall prove that the standard queueing theory is still appliable to analyze the discrete AoI, which is even stronger than the proposed methods handling the continuous AoI.
\end{abstract}


%
\IEEEpeerreviewmaketitle

\section{Introduction}

Nowadays, real time data exchange and information transmission are getting more and more important, especially in those IoT applications. A network node needs fresh information to implement certain computations or make resource scheduling. The age of information (AoI) metric was proposed in \cite{1} to measure the freshness of the messages obtained at the receiver, which is defined as the time a packet experiences since it was generated at the source up to now. Since then, lots of papers have been published considering different aspects of AoI, both in theory and in practice \cite{2}.

In work \cite{3}, the author determined the limiting time average AoI for the updating system with M/M/1, M/D/1 and D/M/1 queues, respectively. The mean of the AoI for the basic queues using LCFS discipline were obtained in \cite{4}. The authors of paper \cite{5} considered the system with multiple sources and derived the average AoI of each source, using the newly proposed method which is called the Stochastic Hybrid System (SHS) analysis. For the case the system has two parallel transmitters, the closed-form expression of average AoI was determined in work \cite{6} by sophisticated random events analysis. Another related metric called the peak age of information (PAoI) was introduced in \cite{7}, which is defined as the peak value of the AoI just before it drops when every time a packet is received by the reveiver. The AoI and peak AoI analysis for three queue models was carried out in \cite{7}, including M/M/1/1, M/M/1/2 and M/M/1/2*, where in the last queue model the packet waiting in queue can be replaced constantly by the newly arriving packets, which is newer in time.

We observe that the analysis of the discrete time AoI was considered in paper \cite{8}, where the authors determined the mean of the AoI and peak AoI for system with Ber/G/1 and G/G/$\infty$ queues by the results obtained in continuous AoI analysis. After then, the AoI of discrete time status updating system was also analyzed in work \cite{9}. The authors gave a new description for one sample path of the AoI stochastic process. Provided these results, a general expression of the generation function of the AoI and peak was derived, which is equivalent to obtain their stationary distributions. Recently, using the queueing theory methods, the discrete time AoI and peak AoI distributions are numerically determined in series work \cite{10,11,12}. In \cite{13}, the stationary distribution of continuous AoI is determined for single-server system with various queue models, and it seems that the ideas and methods used in \cite{9} are from paper \cite{13}.

In this paper, we consider the discrete AoI of status updating system with Ber/G/1/1 queue, concentrate on determining the stationary distribution of the AoI at the destination. In \cite{9}, the authors have obtained the AoI distribution assuming that the queue used is Geo/Geo/1. Our work is different from theirs in two aspects. Firstly, in their model, the queue buffer is unbounded while we consider the bufferless system here. The AoI analysis for system having infinite size is \emph{different} from that of finite size system. Secondly, it is not clear whether the service time distribution can be relaxed to follow an arbitrary distribution in \cite{9}, since the AoI distribution of the generalized system is not given explicitly. While in this work, we show that by defining a two-dimensional state vector which records the instantaneous AoI and the age of the packet in system, all the random transitions can be described following standard queueing theory. Therefore, the steady state of the established two-dimensional process can give the stationary AoI, because the AoI is contained as part of the state vector. It should be pointed out that our basic idea is straightforward and different from the methods proposed and used in \cite{9} and \cite{13}.

The concept of multi-dimensional state vector is also introduced in\cite{5}, where it makes up the SHS analysis for continuous time status updating system. In fact, what we do here is to generalize SHS analysis to discrete time cases. Due to the difficulty of solving the system of rate (or intensity)-balance equations, only the mean or some simple moments of AoI can be calculated via the SHS analysis in \cite{5}. However, we will show that when the discrete AoI is considered, the system of probability-balance equations can be solved completely for some kinds of status updating systems, such that the stationary distribution of any state component in defined muiti-dimensional state vector can be obtained, not only the distribution of AoI.

The stationary AoI distribution of many single-server system has been determined by its Laplace-Stieltjes transformation (LST) in \cite{13}. Nevertheless, calculating the reverse transformation is not easy. The stationary distribution usually has no closed-form expression. Due to this, it is not easy to reduce the continuous results of AoI analysis to discrete versions by direct discretization. On the contrary, as long as the system of stationary equations are solved, we can get the explicit expression of discrete AoI distribution directly. Moreover, we can approach the continuous AoI distribution beginning with a approximate discrete time system, since any continuous probability distribution can be approximated by a $k$-point discrete one. Along this way, the AoI probability density function is established, avoiding performing the complex reverse LST transformation.

The rest of this paper is organized as follows. We describe the basic status updating system and give the necessary definition in Section II. The main results are given in Section III. We obtain the general formula of the stationary AoI distribution assuming that Ber/G/1/1 queue is used in system. As a specific example, AoI distribution expression of system with Ber/Geo/1/1 queue is calculated. The distribution curves for different system parameters are depicted in Section IV. In the concluding remark of Section V, we discuss how to generalize our analysis method to more AoI systms.


\section{System Model and Problem Formulation}
The basic status updating system consists of a source node and a destination node. The source observes a physical process $X(t)$ and samples the states of $X(t)$ at random times. An updating packet is composed of the sampled state $X(t_i)$ of the process and the sample time $t_i$. The source sends the packets to destination via a transmitter, which is modeled as a queue. In Figure \ref{fig1}, the model of a status updating system is depicted.

\begin{figure}[!t]
\centering
\includegraphics[width=3.3in]{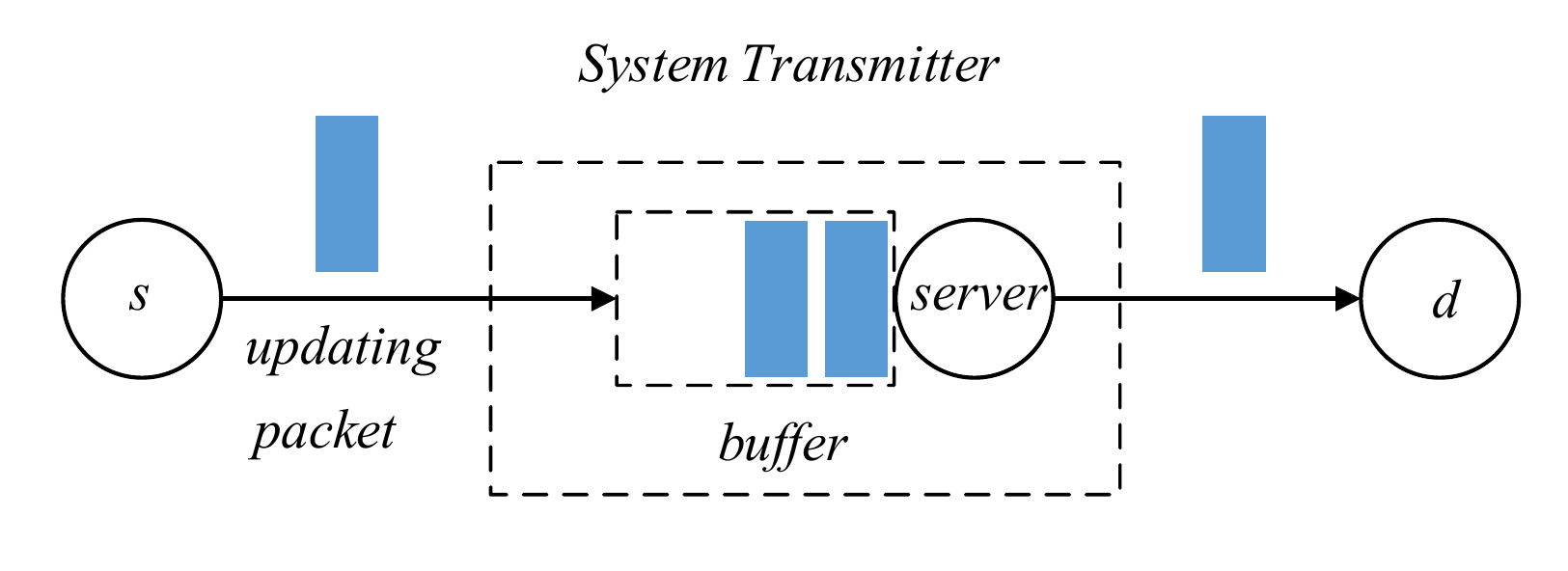}
\caption{The model of a status updating system.}
\label{fig1}
\end{figure}

In the discrete time model, both the time and the AoI are discretized. During the time when no packet is obtained at the receiver, the AoI increases constantly. When the receiver gets an update, the value of the AoI reduces to the system time of the arriving packet at the end of that time slot. In general, the system time of a packet equals the sum of the waiting time in queue and the service time the packet consumes at the server. However, for the queue model used in this paper, the packet waiting time equals zero.

Now, we give the definition of the time average AoI under the discrete time setting.
\begin{Definition}
In the discrete time model, the limiting time average age of information $\overline{\Delta}$ is defined as
\begin{equation}
\overline{\Delta}=\lim_{T\to\infty}\frac{1}{T}\sum\nolimits_{k=1}^{T}a(k)
\end{equation}
where $a(k)$ denotes the value of AoI at $k$th time slot.
\end{Definition}

We simply deal with the expression (1) as follows.
\begin{align}
\overline{\Delta}&=\lim_{T\to\infty}\frac{1}{T}\sum\nolimits_{k=1}^{T}a(k) \notag \\
&=\lim_{T\to\infty}\frac{1}{T}\sum\nolimits_{n=1}^{M}n\cdot |\{k:a(k)=n\}| =\sum\nolimits_{n=1}^{\infty} n\cdot \pi_n \notag
\end{align}
where
\begin{equation}
\pi_n=\lim_{T\to\infty} \frac{|\{k:a(k)=n\}|}{T} \qquad (n \geq 1)    \notag
\end{equation}
is the probability that the AoI takes value $n$ when the observing time tends to infinity. The number $M$ is defined to be $M=\text{max}_{1\leq k \leq T}a(k)$.

Always the AoI processes are assumed to be ergodic. Ergodicity property ensures that we can obtain the AoI performance metrics by investigating any one sample path of the process. Notice that if all the probabilities $\pi_n$, $n \geq 1$ are determined, then the distribution of the AoI is known as well. 

\section{Discrete AoI-distribution: Ber/G/1/1 queues}
In this Section, assume that the queue model is Ber/G/1/1, we derive the general expression of the stationary AoI distribution.

\subsection{Stationary distribution of the AoI: Ber/G/1/1 queues}
Assume that at each time slot, a new updating packet comes independently and with identical probability $p$. We use $A$ to denote the interarrival time between successive arriving packets, then the distribution of $A$ is
\begin{equation}
\Pr\{A=j\}=(1-p)^{j-1}p  \qquad  (j\geq 1)   \notag
\end{equation}

The packet service time is represented by $B$, suppose that its distribution is written as
\begin{equation}
\Pr\{B=j\}=q_j  \qquad  (j\geq 1)   \notag
\end{equation}

The updating packet comes at the beginning of one time slot, while the AoI is recorded and modified at the end of each time slot. The terms state, age-state and the state vector are used interchangeably.

Define the two-dimensional vector $(n_k,m_k)$, where $n_k$ denotes the value of the AoI at the $k$th time slot and the second component $m_k$ represents the age of the packet in system at that time. Constituting the two-dimensional stochastic process
\begin{equation}
AoI_{Ber/G/1/1}=\left\{(n_k,m_k): n_k>m_k \geq 0,k \geq 1\right\} \notag
\end{equation}

In the following paragraphs, we analyze all the random transfers of age-states $(n,m)$ and determine the transtion probabilities, such that the stationary equations describing the steady state of the AoI process can be established. Notice that the first component of the two-dimensional state vector, $n$, denotes the AoI at the destination. Therefore, as long as all the stationary probabilities of state $(n,m)$ are determined, the distribution of the AoI is obtained either. An illustration of the discrete state and state transfers are provided in Figure \ref{fig2}.

\begin{figure}[!t]
\centering
\includegraphics[width=3.2in]{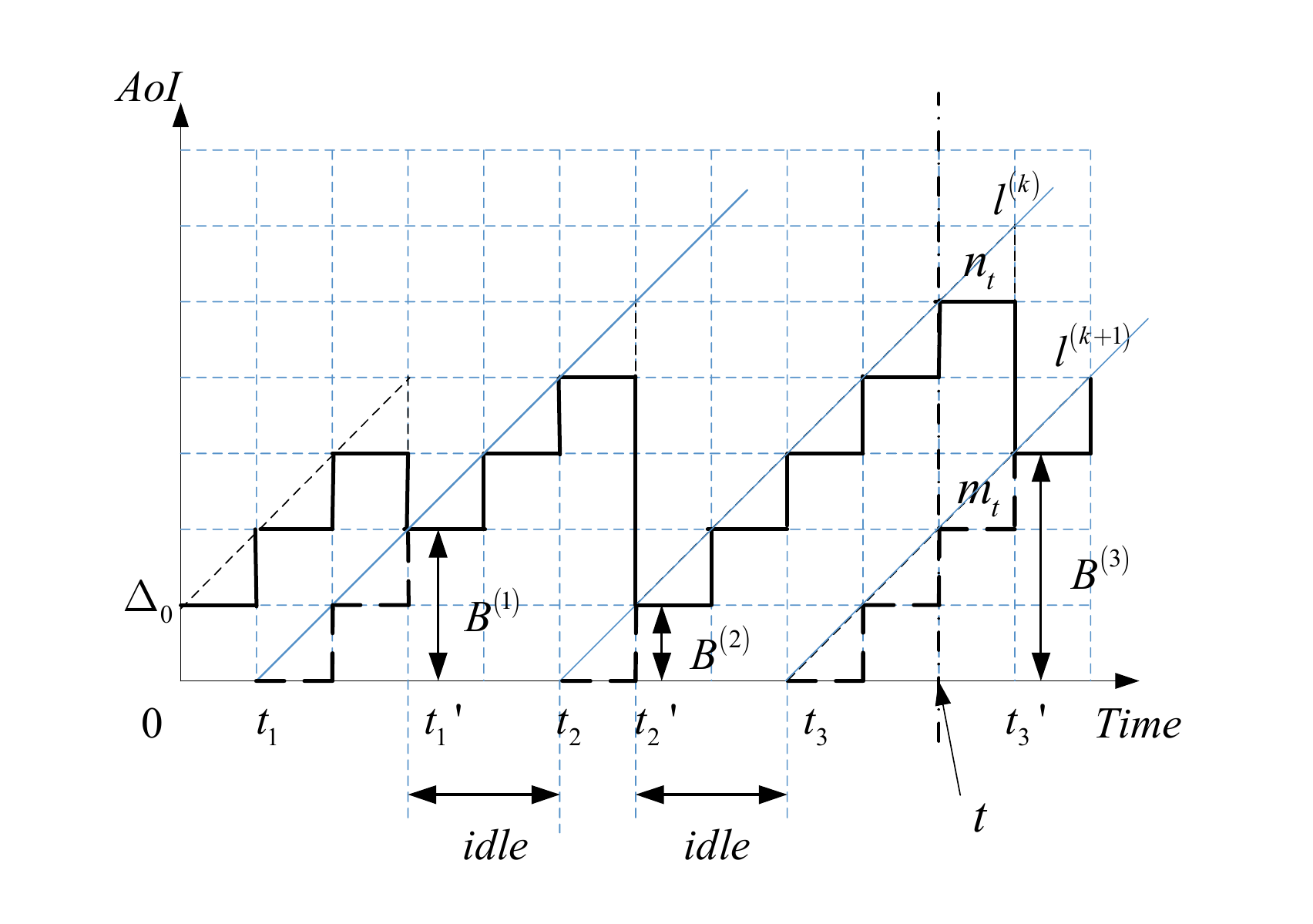}
\caption{An illustration of discrete state and state transfers: Status updating system with Ber/G/1/1 queue.}
\label{fig2}
\end{figure}

First of all, assume that the state vector is $(n,m)$ at the $k$th time slot, where $n>m>0$. At the $(k+1)$th time slot, it shows that $(n,m)$ can only transfer to either $(n+1,m+1)$ or $(m+1,0)$ depending on whether or not the packet service is complete. Because $m$ is greater than zero, which means an updating packet is served at the transmitter currently and the system is full. Let $P_{(n,m),(i,j)}$ be the transition probability the age-state jumps to $(i,j)$ at the next time slot from the current state $(n,m)$. We have that
\begin{equation}
P_{(n,m),(n+1,m+1)}=\Pr\{B>m | B>m-1\}
\end{equation}

The conditional probability is used in equation (2) because when the service time $B$ is longer than $m$ time slots, we must know the event $\{B>m-1\}$.

On the contrary, if the service of the packet finishes at the next time slot, the system is emptied and the age-state changes to $(m+1,0)$. Therefore, we have
\begin{equation}
P_{(n,m),(m+1,0)}=\Pr\{B=m | B>m-1\}
\end{equation}

Next, consider the case $m=0$. At this time, the transmitter is idle and the system is empty. Thus, the state transitions are totally governed by the interarrival time $A$.

For the state vector $(n,0)$, $n \geq 1$, if no packet comes at next time slot, then
\begin{equation}
P_{(n,0),(n+1,0)}=1-p
\end{equation}

Assume that a new packet comes at next time slot. In this case, we have to consider two sub-cases further. When the packet comes and does not leave after one time slot, the age-state changes to $(n+1,1)$. Otherwise, if the arriving packet finishes the service in single time slot and gets to receiver, we show that the state vector will jump to $(1,0)$. Summarize above discussions, we obtain
\begin{equation}
P_{(n,0),(n+1,1)}=p\Pr\{B>1\}, \quad  P_{(n,0),(1,0)}=p\Pr\{B=1\} \notag
\end{equation}

So far, all the state transfers are analyzed and the corresponding transition probabilities are determined. Define $\pi_{(n,m)}$ as the stationary probability of the state $(n,m)$. Then, we give the stationary equations of the AoI process $AoI_{Ber/G/1/1}$.
\begin{Theorem}
Denote $\pi_{(n,m)}$ as the stationary probability of the age-state $(n,m)$, where $n> m\geq 0$. Then, for the two-dimensional stochastic process $AoI_{Ber/G/1/1}$, the stationary equations are determined as
\begin{equation}
\begin{cases}
\pi_{(n,m)}=\pi_{(n-1,m-1)}\Pr\{B>m-1 | B>m-2\}    \\
\qquad \qquad \qquad  \qquad \qquad \qquad \quad \quad \quad     (n >m \geq 2)   \\
\pi_{(n,1)}=\pi_{(n-1,0)}p \Pr\{B>1\}       \qquad \quad \quad    (n \geq 2)    \\
\pi_{(n,0)}=\pi_{(n-1,0)}(1-p)+ \left(\sum\nolimits_{k=n}^{\infty} \pi_{(k,n-1)}\right)   \\
\qquad \quad  \times  \Pr\{B=n-1| B>n-2 \}    \qquad     (n \geq 2)   \\
\pi_{(1,0)}=  \left( \sum\nolimits_{k=1}^{\infty}  \pi_{(k,0)} \right) p \Pr\{B=1\}
\end{cases}
\end{equation}
\end{Theorem}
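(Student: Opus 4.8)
The plan is to obtain each line of the system~(5) as a \emph{global balance equation}: in steady state the stationary probability $\pi_{(n,m)}$ of a given age-state equals the total probability flux entering it, that is, the sum over every predecessor $(i,j)$ of $\pi_{(i,j)}P_{(i,j),(n,m)}$. Since the outgoing transition probabilities of every state were already computed in~(2)--(4) together with the two displayed formulas for $P_{(n,0),(n+1,1)}$ and $P_{(n,0),(1,0)}$, the whole task reduces to correctly identifying, for each target state, the complete list of states that can jump into it in a single slot.

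First I would handle the interior states $(n,m)$ with $n>m\geq 2$. A second component $m\geq 2$ can be produced only by a service-continuation move $(i,j)\to(i+1,j+1)$ applied to a state with $j\geq 1$, since the arrival moves out of $(i,0)$ create second components $0$ or $1$ only and every service-completion move creates second component $0$. Hence the sole predecessor of $(n,m)$ is $(n-1,m-1)$, and reading its outgoing probability off~(2) gives the first line of~(5). The same reasoning isolates the predecessors of $(n,1)$ with $n\geq 2$: a second component equal to $1$ arises only from the arrival-continuation move out of $(n-1,0)$, because a continuation move $(i,j)\to(i+1,j+1)$ into second component $1$ would need $j=0$, which such moves exclude; this yields $\pi_{(n,1)}=\pi_{(n-1,0)}\,p\Pr\{B>1\}$.

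The two remaining lines concern states with second component $0$, which receive flux from infinitely many predecessors. For $(n,0)$ with $n\geq 2$ I would collect two kinds of contribution: the no-arrival self-advance $(n-1,0)\to(n,0)$ weighted by $1-p$ from~(4); and every service-completion move $(k,n-1)\to(n,0)$, which by~(3) is available for each $k$ with $k>n-1$, i.e.\ $k\geq n$, each carrying the common factor $\Pr\{B=n-1 | B>n-2\}$. Aggregating over $k$ produces the third line. For the boundary state $(1,0)$ the only way to reach first component $1$ is the single-slot service of a freshly arrived packet, the move $(k,0)\to(1,0)$ available from every $k\geq 1$ with weight $p\Pr\{B=1\}$; the self-advance route into $(1,0)$ would require the nonexistent predecessor $(0,0)$ and so contributes nothing, giving the final line.

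The only genuine obstacle is the bookkeeping: ensuring that for each target every true predecessor is counted exactly once and no spurious one is admitted, in particular handling the aggregation over $k$ in the $m=0$ equations and discarding invalid states such as $(0,0)$ that lie outside the region $n>m\geq 0$. Once the predecessor sets are pinned down, each equation follows immediately by substituting the previously derived transition probabilities, with no further computation needed.
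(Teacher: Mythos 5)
Your proposal is correct and follows essentially the same route as the paper's own proof: both identify, for each type of target state, the complete set of predecessor states and combine them with the transition probabilities (2)--(4) to write the balance equations line by line. Your version is if anything slightly more explicit than the paper's in arguing that the predecessor lists are exhaustive (e.g., that a second component $m\geq 2$ can only arise from a service-continuation move, and that $(1,0)$ has no self-advance predecessor), but the underlying argument is the same.
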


\begin{proof}
We explain each line of (5) as follows. First of all, for the case $n>m \geq 2$, the system is full and is still full before one time slot. The newly arriving packet cannot enter the transmitter if there exists such one, so that the state transitions are totally determined by the random service time $B$. Assume that the current state is $(n-1,m-1)$, then with probability $\Pr\{B>m-1|B>m-2\}$ the age-state jumps to $(n,m)$ at next time slot. This gives the first line of (5).

Consider the state $(n,1)$ in second line, notice that before one time slot the system is empty because $m$ reduces to 0. Starting with $(n-1,0)$, let a packet comes and stays at the system, which occurs with probability $p \Pr\{B>1\}$, we show that the age-state transfers to $(n,1)$ at next time slot.

Now, the state transitions of age-state $(n,0)$, $n\geq2$ are analyzed.
From the state $(n-1,0)$, as long as no packet arrives in next time slot, the value of the AoI at the receiver increases 1 and the second component $m$ remains zero, we get the state vector $(n,0)$. On the other hand, assume that the original state is $(k,n-1)$ and the service of the packet finishes at the next time slot, it shows that the state will transfer to $(n,0)$ as well. Combining all the transitions from $(k,n-1)$ to $(n,0)$ where $k \geq n$, we obtain the stationary equations for state vectors $(n,0)$, $n \geq 2$ and explain the third line of (5).

Finally, beginning with an empty system, an arriving packet experiences single time slot service and then delivers to destination will make the age-state jump to $(1,0)$, which yields the last equation in (5).
\end{proof}

As long as all the probabilities $\pi_{(n,m)}$ are determined by solving the system of equations (5), the stationary distribution of the AoI can be obtained since we have
\begin{equation}
\Pr\{\Delta = n \}=\sum\nolimits_{m=0}^{n-1} \pi_{(n,m)} \qquad (n \geq 1)
\end{equation}

Furthermore, for $k\geq1$, the distribution function of the AoI can also be obtained as
\begin{equation}
\Pr\{\Delta \leq k\}=\sum\nolimits_{n=1}^{k} \Pr\{\Delta=n\}=\sum\nolimits_{n=1}^{k} \sum\nolimits_{m=0}^{n-1}\pi_{(n,m)} \notag
\end{equation}

We solve the system of equations (5) and determine all the stationary probabilities $\pi_{(n,m)}$ in Theorem 2.

\begin{Theorem}
When the AoI process $AoI_{Ber/G/1/1}$ reaches the steady state, the stationary probabilities for all the age-states $(n,m)$ are given as follows. Firstly,
\begin{equation}
\pi_{(n,0)}=\frac{pq_1F(p,n)}{1+p(1-q_1)\sum\nolimits_{m=1}^{\infty}\sum\nolimits_{l=m}^{\infty}q_l} \quad (n\geq1)
\end{equation}
and the probabilities $\pi_{(n,m)}$, $n>m\geq1$ are solved as
\begin{equation}
\pi_{(n,m)}=\frac{p^2q_1(1-q_1)F(p,n-m)\left(\sum\nolimits_{l=m}^{\infty}q_l \right)}{1+p(1-q_1)\sum\nolimits_{m=1}^{\infty}\sum\nolimits_{l=m}^{\infty}q_l}
\end{equation}
where for $n\geq2$, define that
\begin{align}
F(p,n)=(1-p)^{n-1} +\frac{1-q_1}{q_1}\sum\nolimits_{j=0}^{n-2}(1-p)^{j}q_{n-1-j}
\end{align}
and denote $F(p,1)=1$.
\end{Theorem}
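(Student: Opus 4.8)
The plan is to collapse the two-dimensional system (5) onto the one-dimensional family of boundary probabilities $\pi_{(n,0)}$, solve the resulting scalar recursion, and fix the single remaining degree of freedom by normalization. First I would exploit the first line of (5), which is a pure shift along the diagonal: writing $\Pr\{B>m-1\mid B>m-2\}=\Pr\{B>m-1\}/\Pr\{B>m-2\}$ and iterating from $(n,m)$ down to $(n-m+1,1)$, the successive numerators and denominators telescope and leave
\[
\pi_{(n,m)}=\pi_{(n-m+1,1)}\Pr\{B>m-1\}\qquad(n>m\geq1),
\]
using $\Pr\{B>0\}=1$. Substituting the second line of (5), namely $\pi_{(n-m+1,1)}=\pi_{(n-m,0)}\,p\Pr\{B>1\}$, and recording that $\Pr\{B>1\}=1-q_1$ and $\Pr\{B>m-1\}=\sum_{l=m}^{\infty}q_l$, every off-boundary probability becomes
\[
\pi_{(n,m)}=\pi_{(n-m,0)}\,p(1-q_1)\sum\nolimits_{l=m}^{\infty}q_l .
\]
This already matches the claimed form of $\pi_{(n,m)}$ once $\pi_{(n-m,0)}$ is known, so the whole problem reduces to determining the boundary sequence $a_n:=\pi_{(n,0)}$.

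Next I would feed this expression into the third line of (5), where the crucial simplification is that the infinite sum collapses. Reindexing $k\mapsto k-(n-1)$ gives $\sum_{k=n}^{\infty}\pi_{(k,n-1)}=p(1-q_1)\big(\sum_{l=n-1}^{\infty}q_l\big)S$, with $S:=\sum_{j\geq1}\pi_{(j,0)}$, and multiplying by $\Pr\{B=n-1\mid B>n-2\}=q_{n-1}/\sum_{l=n-1}^{\infty}q_l$ cancels the survival factor entirely. The third line therefore reduces to the first-order recursion
\[
a_n=(1-p)a_{n-1}+p(1-q_1)S\,q_{n-1}\qquad(n\geq2),
\]
with initial value $a_1=S\,p\,q_1$ supplied by the fourth line (using $\Pr\{B=1\}=q_1$). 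Solving this linear recursion by unrolling it and substituting $j=n-i$ in the resulting convolution sum produces exactly $a_n=S\,p\,q_1\,F(p,n)$, with $F(p,n)$ as defined in (9) and $F(p,1)=1$ handling the base case.

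It remains to pin down $S$, and this is the conceptual point I expect to be the main obstacle. Summing $a_n=Spq_1F(p,n)$ over $n$ merely returns the identity $S=Spq_1\sum_{n\geq1}F(p,n)$, and a short computation---splitting $F$ into its geometric and convolution parts, swapping the order of summation, and using $\sum_k q_k=1$---shows that $\sum_{n\geq1}F(p,n)=1/(pq_1)$, so this reduces to $S=S$ and carries no information. The balance equations are therefore scale-degenerate, as stationary equations always are, and $S$ must instead be fixed by the normalization $\sum_{n\geq1}a_n+\sum_{n>m\geq1}\pi_{(n,m)}=1$. Evaluating the off-boundary mass by the same reindexing gives $\sum_{n>m\geq1}\pi_{(n,m)}=p(1-q_1)S\sum_{m\geq1}\sum_{l=m}^{\infty}q_l$, whence normalization reads $S\big[1+p(1-q_1)\sum_{m\geq1}\sum_{l=m}^{\infty}q_l\big]=1$; that is, $S$ equals the reciprocal of the denominator in (7)--(8). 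Back-substituting this value of $S$ into $a_n=Spq_1F(p,n)$ and into the off-boundary formula yields (7) and (8) respectively, completing the proof. The real work is thus the careful bookkeeping in the two infinite-sum collapses, together with the recognition that the identity $\sum_{n\geq1}F(p,n)=1/(pq_1)$ is a self-consistency check rather than a determination of the normalizing constant.
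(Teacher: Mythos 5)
Your proposal is correct and follows essentially the same route as the paper's Appendix A: telescoping the diagonal recursion down to the boundary, collapsing the infinite sum via the cancellation of the survival factor in the conditional probability, unrolling the resulting first-order recursion to produce $F(p,n)$, and fixing the remaining constant by normalization (your parameter $S=\sum_{n}\pi_{(n,0)}$ and the paper's $\pi_{(1,0)}$ are interchangeable via the fourth stationary equation $\pi_{(1,0)}=Spq_1$). Your added remark that $\sum_{n\geq1}F(p,n)=1/(pq_1)$ makes the balance equations scale-degenerate is a correct and worthwhile sanity check, but it does not alter the argument.
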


The proof of Theorem 2 is given in Appendix A.

Given all the stationary probabilities, the distribution of AoI can be determined by equation (6).
\begin{Corollary}
Assume that the queue used in status updating system is Ber/G/1/1. When the system runs in steady state, the stationary AoI-distribution is calculated as
\begin{align}
&\Pr\{\Delta=n\}=\pi_{(n,0)}+ \sum\nolimits_{m=1}^{n-1}\pi_{(n,m)} \notag \\
={}&\frac{pq_1F(p,n)+p^2q_1(1-q_1)\sum\nolimits_{m=1}^{n-1}F(p,n-m)\left(\sum\nolimits_{l=m}^{\infty}q_l\right)}{1+p(1-q_1)\sum\nolimits_{m=1}^{\infty}\sum\nolimits_{l=m}^{\infty}q_l}   \notag
\end{align}

The AoI cumulative probabilities are determined as
\begin{equation}
\Pr\{\Delta \leq k \} =\sum\nolimits_{n=1}^{k}\Pr\{\Delta=n\} \quad (k\geq1) \notag
\end{equation}
where $F(p,n)$ is defined in equation (9).
\end{Corollary}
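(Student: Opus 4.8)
The plan is to obtain the Corollary as a direct substitution, since Theorem 2 already supplies every stationary probability $\pi_{(n,m)}$ in closed form and equation (6) already expresses the AoI distribution as a sum of these probabilities. First I would invoke equation (6) and split the summation over the second coordinate into the idle term $m=0$ and the busy terms $m\geq 1$, writing $\Pr\{\Delta=n\}=\pi_{(n,0)}+\sum_{m=1}^{n-1}\pi_{(n,m)}$ for $n\geq 1$. I would note that when $n=1$ the busy sum is empty, so $\Pr\{\Delta=1\}=\pi_{(1,0)}$, which is consistent with the convention $F(p,1)=1$.

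Next I would substitute the explicit forms, using equation (7) for the idle term $\pi_{(n,0)}$ and equation (8) for each busy term $\pi_{(n,m)}$. The observation that makes the combination immediate is that (7) and (8) carry the identical normalizing denominator $1+p(1-q_1)\sum_{m=1}^{\infty}\sum_{l=m}^{\infty}q_l$; moreover every index in the range $1\leq m\leq n-1$ satisfies $n>m\geq 1$, so equation (8) applies to all of these terms without exception. Factoring out the common denominator and collecting numerators then reproduces the stated expression, with the leading $pq_1F(p,n)$ coming from the idle term and the sum $p^2q_1(1-q_1)\sum_{m=1}^{n-1}F(p,n-m)\left(\sum_{l=m}^{\infty}q_l\right)$ coming from the busy terms. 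The cumulative distribution then follows from the elementary identity $\Pr\{\Delta\leq k\}=\sum_{n=1}^{k}\Pr\{\Delta=n\}$ for a nonnegative integer-valued random variable, with no further manipulation required.

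There is no genuine obstacle here, because all of the analytical work has already been carried out in Theorem 2; the Corollary is a bookkeeping consequence. The only point I would verify, as a consistency check, is normalization: summing $\Pr\{\Delta=n\}$ over $n\geq 1$ should return $1$. This reduces to the summation identity $\sum_{n\geq 1}F(p,n)=1/(pq_1)$, which follows from the definition in equation (9) after a geometric series evaluation together with a reindexing $i=n-1-j$ of the inner convolution. Using this identity, the $n$-sum of the idle numerators equals $1$, while the $n$-sum of the busy numerators equals $p(1-q_1)\sum_{m=1}^{\infty}\sum_{l=m}^{\infty}q_l$ after interchanging the order of summation; hence the total numerator collapses to the denominator and the probabilities sum to one, cross-checking the expressions inherited from Theorem 2.
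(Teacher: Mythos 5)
Your proof is correct and takes essentially the same route as the paper's own proof: split equation (6) into the idle term $\pi_{(n,0)}$ and the busy terms $\sum_{m=1}^{n-1}\pi_{(n,m)}$, substitute the closed forms (7) and (8) over their common denominator, and obtain the cumulative distribution by summing over $n$. Your added normalization check via the identity $\sum_{n\geq 1}F(p,n)=1/(pq_1)$ is a valid extra consistency verification (it indeed collapses the total numerator onto the denominator) that the paper does not include, but it does not alter the argument.
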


\begin{proof}
Since the first component of the state vector denotes the AoI at the receiver, the probability that the AoI equals $n$ can be obtained by adding up all the stationary probabilities having $n$ as first component. We show that
\begin{align}
&\Pr\{\Delta=n\}=\sum\nolimits_{m=0}^{n-1} \pi_{(n,m)}   \notag \\
={}& \pi_{(n,0)} + \sum\nolimits_{m=1}^{n-1} \pi_{(n,m)}   \notag \\
={}& \frac{pq_1F(p,n)}{1+p(1-q_1)\sum\nolimits_{m=1}^{\infty}\sum\nolimits_{l=m}^{\infty}q_l} \notag \\
  {}& +\sum\nolimits_{m=1}^{n-1} \frac{p^2q_1(1-q_1)F(p,n-m)\left(\sum\nolimits_{l=m}^{\infty}q_l\right)}{1+p(1-q_1)\sum\nolimits_{m=1}^{\infty}\sum\nolimits_{l=m}^{\infty}q_l}  \notag \\
={}&\frac{pq_1F(p,n)+p^2q_1(1-q_1)\sum\nolimits_{m=1}^{n-1}F(p,n-m)\left(\sum\nolimits_{l=m}^{\infty}q_l\right)}{1+p(1-q_1)\sum\nolimits_{m=1}^{\infty}\sum\nolimits_{l=m}^{\infty}q_l} \notag
\end{align}

For $k\geq1$, the cumulative probability of the AoI equals the sum of probabilities the AoI takes value $n$ from $n=1$ to $k$. This completes the proof of Corollary 1.
\end{proof}

\subsection{The AoI stationary distribution: Ber/Geo/1/1 queue}

When the packet service time $B$ is also a geometric random variable, we can calculate the explicit expression of AoI distribution. Suppose that $B$ is distributed as
\begin{equation}
\Pr\{B=k\}=(1-\gamma)^{k-1}\gamma  \quad (k \geq 1)
\end{equation}

To keep queueing system stable, let $p < \gamma$. We first find all the stationary probabilities by solving the system of equations (5), and then derive the explicit expression of AoI-distribution. The results are stated in Theorem 3 below.
\begin{Theorem}
When the service time $B$ is also geometrically distributed, the stationary probabilities $\pi_{(n,m)}$ are solved as
\begin{equation}
\begin{cases}  \notag
\pi_{(n,0)}=\frac{p\gamma ^2\left[(1-p)^n - (1-\gamma)^n \right] }{(p+\gamma -p\gamma)(\gamma -p)} &   (n\geq 1) \\
\pi_{(n,m)}=\frac{(p\gamma )^2\left[(1-p)^{n-m}(1-\gamma)^{m}-(1-\gamma)^n \right]}{(p+\gamma -p\gamma)(\gamma -p)}& (n>m \geq 1)
\end{cases}
\end{equation}
\end{Theorem}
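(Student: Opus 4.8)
The plan is to specialize the general solution of Theorem 2 to the geometric service law $q_j=(1-\gamma)^{j-1}\gamma$ by evaluating every service-time functional appearing in (7)--(9) in closed form, and then substituting. First I would compute the three ingredients that drive the formulas: the head mass $q_1=\gamma$; the tail sums $\sum_{l=m}^{\infty}q_l$, which for a geometric law collapse to $(1-\gamma)^{m-1}$ (this is just $\Pr\{B\ge m\}$); and the double sum in the normalizing denominator, $\sum_{m=1}^{\infty}\sum_{l=m}^{\infty}q_l=\sum_{m=1}^{\infty}(1-\gamma)^{m-1}=1/\gamma$. Consequently the common denominator $1+p(1-q_1)\sum_{m=1}^{\infty}\sum_{l=m}^{\infty}q_l$ of (7) and (8) reduces to $(p+\gamma-p\gamma)/\gamma$, which already accounts for the constant factor appearing in the claimed expressions.

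The crux is obtaining a closed form for $F(p,n)$ defined in (9). Substituting $q_{n-1-j}=(1-\gamma)^{n-2-j}\gamma$ turns $\sum_{j=0}^{n-2}(1-p)^jq_{n-1-j}$ into a finite geometric series with ratio $r=(1-p)/(1-\gamma)$, which I would sum in closed form. Its denominator $r-1=(\gamma-p)/(1-\gamma)$ produces the factor $\gamma-p$; after recombining the surviving $(1-p)^{n-1}$ term and using the identity $(\gamma-p)+(1-\gamma)=1-p$ to factor out $(1-p)^{n-1}$, the expression collapses to
\[
F(p,n)=\frac{(1-p)^n-(1-\gamma)^n}{\gamma-p}.
\]
I would then check that this single formula is valid at $n=1$ as well, since it returns $\frac{(1-p)-(1-\gamma)}{\gamma-p}=1$, matching the convention $F(p,1)=1$; this lets me use one uniform expression and avoid a separate base case.

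With these pieces in hand the remainder is substitution and bookkeeping. For $\pi_{(n,0)}$ I would insert $q_1=\gamma$ and the closed form of $F(p,n)$ into (7), multiply numerator and denominator by $\gamma$, and collect to reach $\frac{p\gamma^2[(1-p)^n-(1-\gamma)^n]}{(p+\gamma-p\gamma)(\gamma-p)}$. For $\pi_{(n,m)}$ with $n>m\ge1$ I would put $q_1(1-q_1)=\gamma(1-\gamma)$, the tail sum $(1-\gamma)^{m-1}$, and $F(p,n-m)$ into (8); the one simplification worth highlighting is that the factor $(1-\gamma)^m=(1-\gamma)^{m-1}(1-\gamma)$ distributes over the bracket to give $(1-p)^{n-m}(1-\gamma)^m-(1-\gamma)^n$, after which collecting the constants yields $\frac{(p\gamma)^2[(1-p)^{n-m}(1-\gamma)^m-(1-\gamma)^n]}{(p+\gamma-p\gamma)(\gamma-p)}$.

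The main obstacle I anticipate is purely the algebra of the second step, namely evaluating $F(p,n)$ and recognizing the cancellation that produces the difference-of-powers form, since everything downstream is routine once $F$ is written as a quotient. The stability assumption $p<\gamma$ guarantees $\gamma-p\neq0$, so no degenerate division arises; the limiting value as $p\to\gamma$, where $F(p,n)$ tends to $n(1-\gamma)^{n-1}$, would be needed only at the boundary of the stability region, which the theorem excludes.
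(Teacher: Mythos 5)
Your proposal is correct and follows essentially the same route as the paper's own proof in Appendix B: specializing the general Ber/G/1/1 solution of Theorem 2 to the geometric service law, summing the finite geometric series with ratio $(1-p)/(1-\gamma)$ to obtain $F(p,n)=\frac{(1-p)^n-(1-\gamma)^n}{\gamma-p}$, and substituting (the paper phrases this via its intermediate recursions, but the algebra, including the normalization constant $\frac{p+\gamma-p\gamma}{\gamma}$, is identical). The only cosmetic difference is that you verify the $n=1$ case by checking $F(p,1)=1$ directly from the closed form, whereas the paper notes afterward that its expression for $n\geq 2$ remains valid at $n=1$.
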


The Theorem 3 can be proved either from the general formulas (7)-(9) or solving system of equation (5) directly. We give the derivation details in Appendix B.

By collecting all the stationary probabilities with identical AoI-component, we can derive the stationary distribution of the AoI. The cumulative probability distribution of the AoI is calculated accordingly. The calculation detalis are omitted.

\begin{Corollary}
The distribution of AoI for the status updating system with Ber/Geo/1/1 queue is given as
\begin{align}
\Pr\{\Delta=n\}&= \frac{p(1-p)\gamma ^3 \left[(1-p)^n-(1-\gamma)^n \right]}{(p+\gamma -p\gamma)(\gamma -p)^2}  \notag \\
&\qquad -\frac{(p\gamma)^2 n(1-\gamma)^n}{(p+\gamma -p\gamma)(\gamma -p)} \qquad (n \geq 1)
\end{align}
and the AoI cumulative probability is determined as
\begin{align}
\Pr\{\Delta \leq k\}&= 1-\frac{(1-p)\gamma^2\left[(1-p)^{k+1}\gamma -p(1-\gamma)^{k+1} \right]}{(p+\gamma -p\gamma)(\gamma -p)^2} \notag \\
& \quad +  \frac{p^2 (1+k\gamma)(1-\gamma)^{k+1}}{(p+\gamma -p\gamma)(\gamma -p)} \qquad (k\geq1)
\end{align}
\end{Corollary}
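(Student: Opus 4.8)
The plan is to start directly from equation (6), which expresses $\Pr\{\Delta=n\}$ as the column sum $\sum_{m=0}^{n-1}\pi_{(n,m)}$ of the stationary probabilities, and to substitute the closed forms established in Theorem 3. First I would isolate the $m=0$ contribution $\pi_{(n,0)}$ and treat the remaining sum $\sum_{m=1}^{n-1}\pi_{(n,m)}$ separately, since both pieces share the common denominator $(p+\gamma-p\gamma)(\gamma-p)$ and differ only in their numerators.

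The core computation is to evaluate
\[
\sum_{m=1}^{n-1}\pi_{(n,m)}=\frac{(p\gamma)^2}{(p+\gamma-p\gamma)(\gamma-p)}\sum_{m=1}^{n-1}\left[(1-p)^{n-m}(1-\gamma)^m-(1-\gamma)^n\right].
\]
The bracketed sum splits into a geometric series $\sum_{m=1}^{n-1}(1-p)^{n-m}(1-\gamma)^m$ with common ratio $(1-\gamma)/(1-p)$, plus a trivial constant sum $(n-1)(1-\gamma)^n$. Using $1-(1-\gamma)/(1-p)=(\gamma-p)/(1-p)$, the geometric part collapses to $\bigl[(1-p)^n(1-\gamma)-(1-p)(1-\gamma)^n\bigr]/(\gamma-p)$. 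Adding back $\pi_{(n,0)}$ and collecting the $(1-p)^n$ and $n(1-\gamma)^n$ terms over the common denominator $(p+\gamma-p\gamma)(\gamma-p)^2$ should reproduce the stated formula for $\Pr\{\Delta=n\}$.

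For the cumulative distribution I would sum the just-derived expression for $\Pr\{\Delta=n\}$ over $n=1,\dots,k$. This reduces to three elementary finite sums: the geometric sums $\sum_{n=1}^{k}(1-p)^n$ and $\sum_{n=1}^{k}(1-\gamma)^n$, and the arithmetic-geometric sum $\sum_{n=1}^{k}n(1-\gamma)^n$. The last is the only nonroutine ingredient; it is handled by the standard differentiation-of-geometric-series identity, yielding a closed form involving $(1+k\gamma)(1-\gamma)^{k+1}$ that matches the third term of the claimed CDF.

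I expect the main obstacle to be purely algebraic: reconciling the several partial fractions over the single denominator $(p+\gamma-p\gamma)(\gamma-p)^2$ without sign errors, especially in the arithmetic-geometric term where a factor $(\gamma-p)$ in the denominator competes with the linear factor $n$. A useful correctness check is the normalization $\Pr\{\Delta\le k\}\to 1$ as $k\to\infty$, which holds because the stability assumption $p<\gamma$ forces both $(1-p)^{k+1}$ and $(1-\gamma)^{k+1}$ to vanish; confirming this limit validates the constant term and the overall scaling of the final expressions.
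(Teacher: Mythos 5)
Your proposal is correct and follows essentially the same route as the paper: sum the Theorem~3 stationary probabilities $\pi_{(n,m)}$ over $m=0,\dots,n-1$ via equation (6), evaluating the geometric series in $m$, then sum over $n=1,\dots,k$ (geometric plus arithmetic-geometric sums) to get the CDF. Your algebra checks out — including the collapse of the geometric part to $\bigl[(1-p)^n(1-\gamma)-(1-p)(1-\gamma)^n\bigr]/(\gamma-p)$ and the $(1+k\gamma)(1-\gamma)^{k+1}$ term — and the normalization check $\Pr\{\Delta\le k\}\to 1$ is a sensible validation the paper itself does not mention.
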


\begin{figure*}[!t]
\centering
\subfloat[Stationary distribution of discrete AoI]{\includegraphics[width=2.2in]{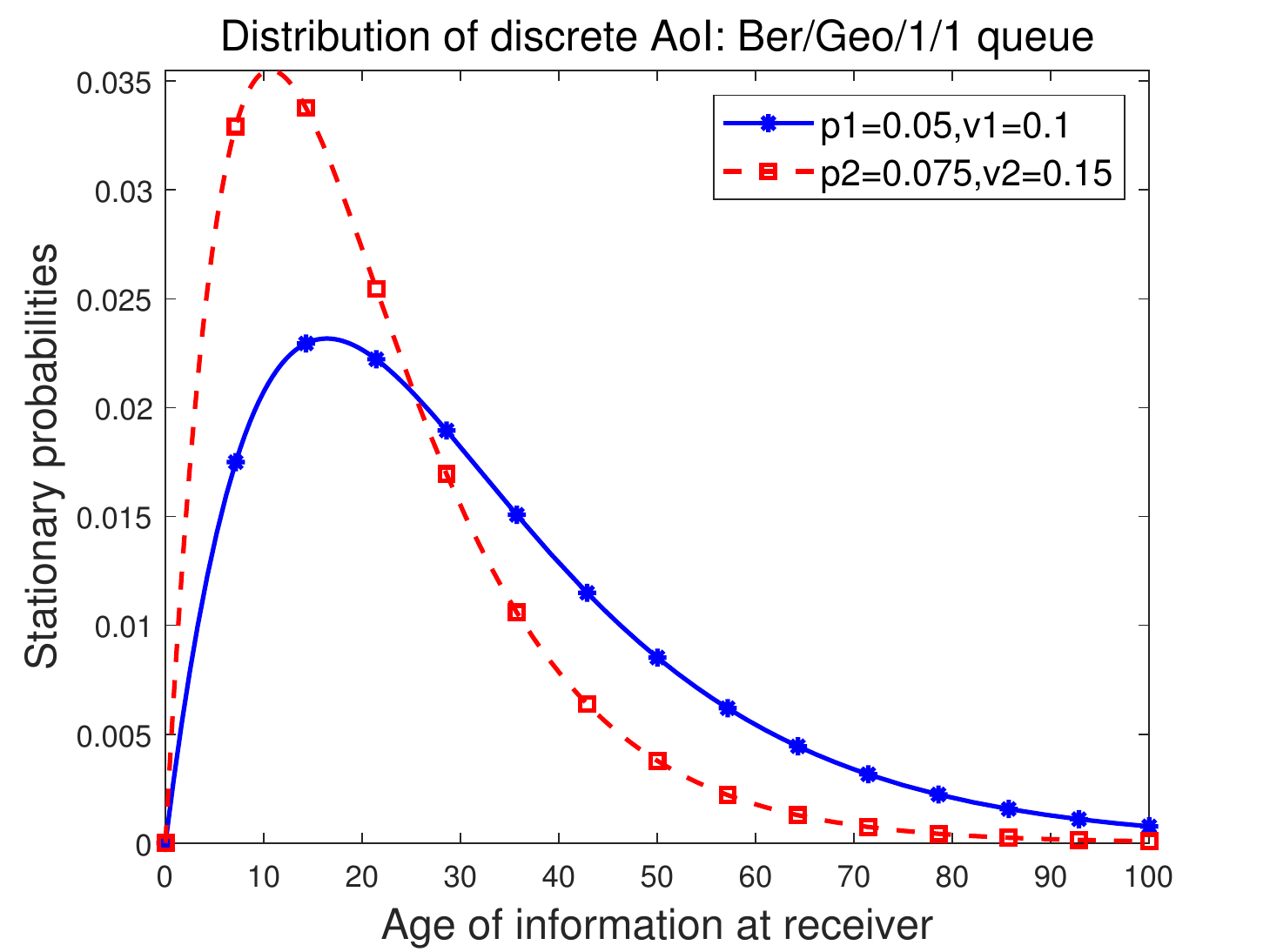}%
\label{fig_fik1}}
\hfil
\subfloat[Discrete AoI cumulative probabilities]{\includegraphics[width=2.2in]{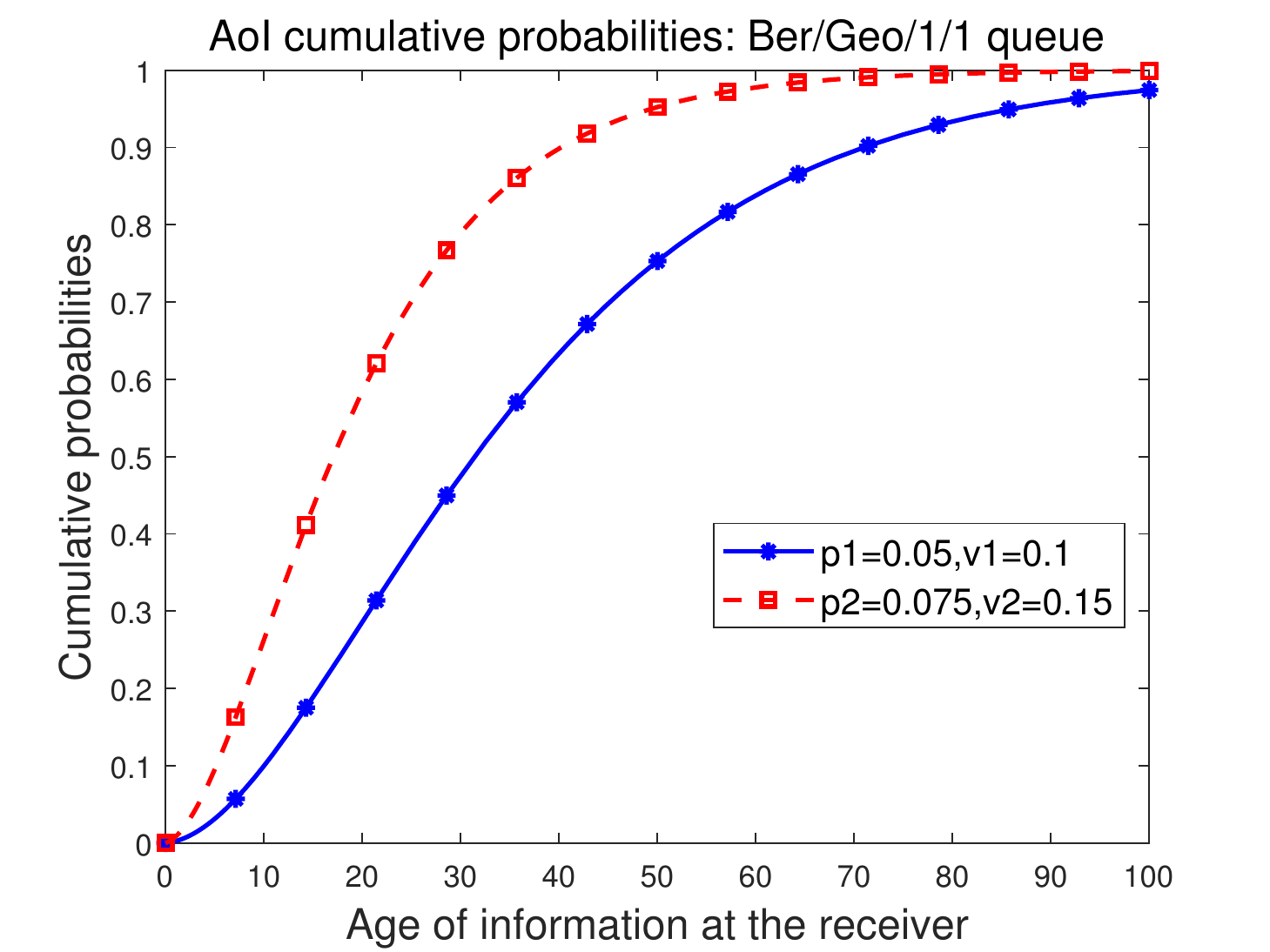}%
\label{fig_fik2}}
\hfil
\subfloat[Average continuous and discrete AoI]{\includegraphics[width=2.2in]{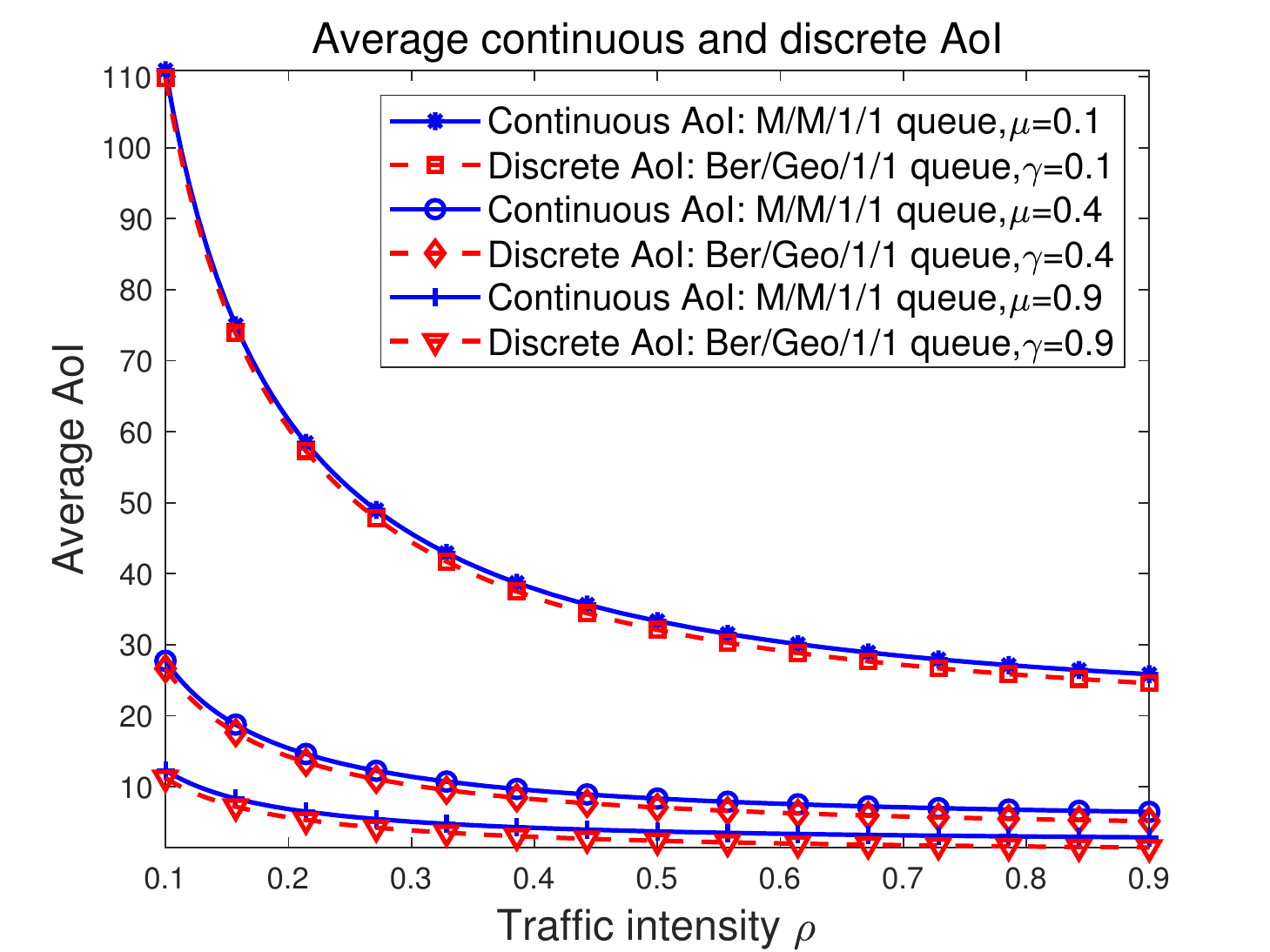}%
\label{fig_fik3}}
\caption{Numerical simulations of discrete AoI and discrete approximations of mean of continuous AoI.}
\label{fig3}
\end{figure*}

Notice that in order to obtain the stationary distribution of AoI, we invoke a two-dimensional state vector $(n,m)$ where the first component denotes the instantaneous AoI at the destination, while the latter parameter represents the age of the packet in system. Since the system size is 1, so that the packet waiting time is zero, the second component $m$ is exactly the packet service time, which should be geometrically distributed in this case. We prove this claim by computing the marginal distribution of $M$, which is defined as the random variable of the second component of the age-state.

For $m\geq 1$, we show that
\begin{align}
&\Pr\{M=m\}=\sum\nolimits_{n=m+1}^{\infty}\pi_{(n,m)}  \notag \\
={}& \sum\nolimits_{n=m+1}^{\infty}\frac{(p\gamma)^2\left[(1-p)^{n-m}(1-\gamma)^m-(1-\gamma)^n \right] }{(p+\gamma-p\gamma)(\gamma-p)} \notag \\
={}& \frac{(p\gamma)^2}{(p+\gamma-p\gamma)(\gamma-p)}\left[\frac{1-p}{p}(1-\gamma)^m - \frac{1-\gamma}{\gamma}(1-\gamma)^m \right]  \notag \\
={}& \frac{p\gamma}{p+\gamma-p\gamma}(1-\gamma)^m
\end{align}

We have used the probability expressions obtained in Theorem 3. Probability (13) should be normalized by the probability that the transmitter is not idle, since in that case we have $m=0$. It shows that
\begin{align}
&\Pr\{\text{system is not idle}\} \notag \\
={}&1-\sum\nolimits_{n=1}^{\infty}\pi_{(n,0)} =\frac{p(1-\gamma)}{p+\gamma-p\gamma}  \notag
\end{align}

Therefore, for $m\geq1$, we obtain that
\begin{equation}
\Pr\{B=m\}=\frac{\Pr\{M=m\}}{\Pr\{\text{system is not idle}\}}=(1-\gamma)^{m-1}\gamma  \notag
\end{equation}
which is indeed the geometric distribution with parameter $\gamma$.

From (11), the mean of the discrete AoI can be calculated. We will show that when the time slot is sufficiently small, the average discrete AoI is close to average continuous AoI, which is denoted by $\overline{\Delta}_{M/M/1/1}$ and is determined in \cite{7} as
\begin{equation}
\overline{\Delta}_{M/M/1/1}=\frac{1}{\mu}\left(1+\frac{1}{\rho}+\frac{\rho}{1+\rho}\right)  \notag
\end{equation}

\begin{Corollary}
For the system with Ber/Geo/1/1 queue, the limiting time average discrete AoI is equal to
\begin{equation}
\overline{\Delta}^{(d)}_{Ber/Geo/1/1}=\frac{1}{\gamma}\left((1-\gamma) + \frac{1}{\rho_d} + \frac{\rho_d}{\frac{1}{1-\gamma}+\rho_d}\right)
\end{equation}
and approaches $\overline{\Delta}_{M/M/1/1}$ in almost all the range of $\rho_d$, which is discrete traffic intensity and is defined as $\rho_d=p/\gamma$.
\end{Corollary}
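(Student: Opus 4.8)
The plan is to obtain the mean directly from the stationary AoI distribution established in Corollary 2, and then to recover the continuous-time value by scaling the slot length. By Definition 1 and the assumed ergodicity, $\overline{\Delta}^{(d)}_{Ber/Geo/1/1}=\sum_{n=1}^{\infty} n\Pr\{\Delta=n\}$. I would first write the distribution of Corollary 2 compactly as $\Pr\{\Delta=n\}=C_1\bigl[(1-p)^n-(1-\gamma)^n\bigr]-C_2\, n(1-\gamma)^n$, where
\[
C_1=\frac{p(1-p)\gamma^3}{(p+\gamma-p\gamma)(\gamma-p)^2},\qquad C_2=\frac{(p\gamma)^2}{(p+\gamma-p\gamma)(\gamma-p)}.
\]
Then the mean reduces to three standard power series. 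Using $\sum_{n\ge1}nx^n=x/(1-x)^2$ and $\sum_{n\ge1}n^2x^n=x(1+x)/(1-x)^3$ evaluated at $x=1-p$ and $x=1-\gamma$, i.e. $\sum n(1-p)^n=(1-p)/p^2$, $\sum n(1-\gamma)^n=(1-\gamma)/\gamma^2$, and $\sum n^2(1-\gamma)^n=(1-\gamma)(2-\gamma)/\gamma^3$, I obtain
\[
\overline{\Delta}^{(d)}_{Ber/Geo/1/1}=C_1\!\left[\frac{1-p}{p^2}-\frac{1-\gamma}{\gamma^2}\right]-C_2\,\frac{(1-\gamma)(2-\gamma)}{\gamma^3}.
\]
It then remains to clear denominators and reorganize the factors of $(\gamma-p)$, $p$, and $\gamma$ until the expression collapses to the claimed form; substituting $\rho_d=p/\gamma$ at the very end produces the three displayed summands of (14).

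For the continuous-time claim I would introduce a slot length $\delta$ and couple the discrete parameters to the continuous rates by $p=\lambda\delta$ and $\gamma=\mu\delta$, so that $\rho_d=p/\gamma=\lambda/\mu=\rho$ stays fixed while $\delta\to0$. Since the discrete age in (14) is counted in slots, the physical age equals $\delta\,\overline{\Delta}^{(d)}_{Ber/Geo/1/1}=(\gamma/\mu)\,\overline{\Delta}^{(d)}_{Ber/Geo/1/1}$, so the explicit $1/\gamma$ prefactor in (14) cancels against this $\gamma$, leaving $\tfrac{1}{\mu}\bigl[(1-\gamma)+\tfrac{1}{\rho_d}+\tfrac{\rho_d}{1/(1-\gamma)+\rho_d}\bigr]$. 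Letting $\gamma\to0$ (so $1-\gamma\to1$ and $1/(1-\gamma)\to1$) yields $\frac{1}{\mu}\bigl(1+\frac{1}{\rho}+\frac{\rho}{1+\rho}\bigr)=\overline{\Delta}_{M/M/1/1}$. The qualifier \emph{almost all the range of $\rho_d$} is then read as asserting that the $O(\gamma)$ correction terms are uniformly small away from the stability boundary $\rho_d\to1$ (equivalently $p\to\gamma$), in agreement with the curves in Figure 3(c).

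The main obstacle I anticipate is the algebraic bookkeeping in the first step: reassembling the two geometric contributions in $(1-p)^n$ and $(1-\gamma)^n$ together with the arithmetic-geometric sum $n(1-\gamma)^n$ into exactly the compact closed form (14), and in particular producing the nontrivial third term $\rho_d/\bigl(\tfrac{1}{1-\gamma}+\rho_d\bigr)$ rather than a raw partial-fraction expression. A secondary point that needs care is the justification of the slot-scaling correspondence $p=\lambda\delta,\ \gamma=\mu\delta$ and of interchanging the $\delta\to0$ limit with the (finite, closed-form) mean, together with a precise reading of \emph{almost all} as asymptotic agreement uniformly on compact subsets of $\rho_d\in(0,1)$ bounded away from the boundary.
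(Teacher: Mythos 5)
Your proposal is correct and follows essentially the same route as the paper's Appendix C: expand $\sum_{n\geq1} n\Pr\{\Delta=n\}$ using the standard sums $\sum n x^n = x/(1-x)^2$ and $\sum n^2 x^n = x(1+x)/(1-x)^3$ at $x=1-p$ and $x=1-\gamma$, simplify to the closed form, and then let $\gamma\to0$ (slot length vanishing) to recover $\overline{\Delta}_{M/M/1/1}$. Your explicit slot-scaling coupling $p=\lambda\delta$, $\gamma=\mu\delta$ is just a more carefully stated version of the paper's one-line remark that the limit holds ``when $(1-\gamma)\to1$, i.e., when the time slot is short enough,'' so the two arguments coincide in substance.
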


The Proof of Corollary 3 is shown in Appendix C.

%

\section{Numerical Results}
In Figure \ref{fig3}, we draw the stationary distribution of discrete AoI along with its cumulative probabilities. In addition, the mean of discrete and continuous AoI are also depicted.

We take two group of parameters $(p,\gamma)$ with fixed ratio 0.5. From Figure \ref{fig_fik1}, it sees that the AoI distribution curve is more concentrated when $p$ and $\gamma$ are both larger, and has higher peak probability. The cumulative probability distributions also show that when the parameters are larger, the AoI cumulative probability gets to 1 more faster.

Average AoIs of system with M/M/1/1 and Ber/Geo/1/1 queue are given in Figure \ref{fig_fik3}. We draw both curves for three different cases, where the service rate $\mu$ (or $\gamma$) is fixed and the traffic intensity varies. Numerical results show that in almost all the ranges of $\rho$ (or $\rho_d$), the mean of discrete AoI is very close to its continuous counterpart.



\section{Conclusion}
In this paper, we find the stationary distribution of discrete AoI for the system with Ber/G/1/1 queue. By defining a two-dimensional age-state which contains AoI and describing all the random state transitions, we show that the AoI distribution can be obtained if the steady state of constituted two-dimensional stochastic process is solved. The idea of using a multiple-dimensional state vector comes from SHS analysis of AoI. What we do is generalizing SHS method to discrete time status updating systems. The numerical results show that the mean of discrete AoI is very close to the mean of continuous AoI when time slot is small enough.

To obtain the exact probability density function of continuous AoI, often a reverse LST transformation has to be computed which is not easy. We point out that although the proposed idea is straightforward, it is still appliable to analyze AoI of more systems. The biggest advantage is that the exact expression of AoI distribution can be obtained as long as the steady state of constituted AoI stochastic process is solved.

\bibliographystyle{IEEEtran}
\bibliography{paperlists}
%


\newpage
\onecolumn

\appendices
\section{Proof of Theorem 2}
In this Appendix, we solve the system of equations (5) and derive all the stationary probabilities $\pi_{(n,m)}$. For convenience, the stationary equations are copied in the following.
\begin{equation}
\begin{cases}
\pi_{(n,m)}=\pi_{(n-1,m-1)}\Pr\{B>m-1 | B>m-2\}   &   (n >m \geq 2)   \\
\pi_{(n,1)}=\pi_{(n-1,0)}p \Pr\{B>1\} &   (n \geq 2)    \\
\pi_{(n,0)}=\pi_{(n-1,0)}(1-p)+ \left(\sum\nolimits_{k=n}^{\infty} \pi_{(k,n-1)}\right)  \Pr\{B=n-1| B>n-2 \}  &  (n \geq 2)   \\
\pi_{(1,0)}=  \left( \sum\nolimits_{k=1}^{\infty}  \pi_{(k,0)} \right) p \Pr\{B=1\}
\end{cases}
\end{equation}

Firstly, we show that the equations (17) can be simplified by eliminating the dependence on the second component $m$. For a general age-state $(n,m)$, $n>m \geq 2$, iteratively applying the first line of (17) yields
\begin{align}
\pi_{(n,m)}&=\pi_{(n-1,m-1)}\Pr\{B>m-1|B>m-2\}  \notag \\
&=\pi_{(n-2,m-2)}\Pr\{B>m-2|B>m-3\} \Pr\{B>m-1|B>m-2\}  \notag \\
&\qquad \qquad \qquad \qquad  \qquad \qquad  \quad \vdots   \notag \\
&=\pi_{(n-m+1,1)}\Pr\{B>1\} \times \cdots \times \Pr\{B>m-1|B>m-2\}  \notag \\
&=\pi_{(n-m+1,1)}\Pr\{B>m-1\} \notag \\
&=\pi_{(n-m,0)}p (1-q_1) \Pr\{B>m-1\}
\end{align}

For the last step recursion, we use the second line of (17).

Equation (18) shows that probability $\pi_{(n,m)}$, $n>m\geq 1$ can be represented by $\pi_{(n,0)}$, $n\geq1$. As a result, to solve system of equations (17) we only need to determine those stationary probabilities $\pi_{(n,0)}$.

Next, the infinite sum in the third line of (17) is calculated. We show that
\begin{align}
&\left(\sum\nolimits_{k=n}^{\infty} \pi_{(k,n-1)} \right) \Pr\{B=n-1|B>n-2\} \notag \\
={}&\left(\sum\nolimits_{k=n}^{\infty} \pi_{(k-n+1,0)}p(1-q_1) \Pr\{B>n-2\} \right) \Pr\{B=n-1|B>n-2\} \notag \\
={}& \left(\sum\nolimits_{k=n}^{\infty} \pi_{(k-n+1,0)}\right) p(1-q_1) \Pr\{B=n-1\} \notag \\
={}& p(1-q_1)q_{n-1}\left(\sum\nolimits_{k=1}^{\infty} \pi_{(k,0)}\right) \notag \\
={}& \frac{(1-q_1)\pi_{(1,0)}}{q_1} q_{n-1}
\end{align}

In order to obtain equation (19), notice that the last equation in (17) implies that
\begin{equation}
\sum\nolimits_{k=1}^{\infty} \pi_{(k,0)} = \frac{\pi_{(1,0)}}{pq_1}
\end{equation}

Substituting (19) into the third line of (17), we derive the following recursive formula for the probabilities $\pi_{(n,0)}$.
\begin{equation}
\pi_{(n,0)}=\pi_{(n-1,0)}(1-p) +\frac{(1-q_1)\pi_{(1,0)}}{q_1}q_{n-1}   \qquad  (n\geq2)
\end{equation}

Repeatedly using (21) yields following equations
\begin{align}
\pi_{(n,0)}&=\pi_{(n-1,0)}(1-p)+\frac{(1-q_1)\pi_{(1,0)}}{q_1}q_{n-1}  \notag \\
&=\left(\pi_{(n-2,0)}(1-p)+\frac{(1-q_1)\pi_{(1,0)}}{q_1}q_{n-2}\right)(1-p) + \frac{(1-q_1)\pi_{(1,0)}}{q_1}q_{n-1}  \notag \\
&=\pi_{(n-2,0)}(1-p)^2 + \frac{(1-q_1)\pi_{(1,0)}}{q_1}\left[(1-p)q_{n-2}+q_{n-1}\right]  \notag \\
& \qquad \qquad \qquad \qquad \qquad  \qquad \qquad   \vdots  \notag \\
&=\pi_{(1,0)}(1-p)^{n-1} + \frac{(1-q_1)\pi_{(1,0)}}{q_1} \left( \sum\nolimits_{j=0}^{n-2}(1-p)^{j}q_{n-1-j} \right)
\end{align}

Let $n=1$ in (22), we obtain the obvious equation $\pi_{(1,0)}=\pi_{(1,0)}$. Thus, the expression (22) is valid for all $n\geq 1$.

Since all the probabilities $\pi_{(n,m)}$ add up to 1, we have
\begin{align}
1&=\sum\nolimits_{m=0}^{\infty}\sum\nolimits_{n=m+1}^{\infty}\pi_{(n,m)}\notag \\
&=\sum\nolimits_{n=1}^{\infty}\pi_{(n,0)} + \sum\nolimits_{m=1}^{\infty}\sum\nolimits_{n=m+1}^{\infty}\pi_{(n,m)}  \notag \\
&=\sum\nolimits_{n=1}^{\infty}\pi_{(n,0)} +\sum\nolimits_{m=1}^{\infty}\sum\nolimits_{n=m+1}^{\infty}\pi_{(n-m,0)}p(1-q_1)\Pr\{B>m-1\}  \notag  \\
&=\sum\nolimits_{n=1}^{\infty}\pi_{(n,0)} + p(1-q_1)\sum\nolimits_{m=1}^{\infty}\left( \sum\nolimits_{n=1}^{\infty}\pi_{(n,0)} \right)\Pr\{B>m-1\} \notag \\
&=\left( \sum\nolimits_{n=1}^{\infty}\pi_{(n,0)} \right) \left[ 1+p(1-q_1)\sum\nolimits_{m=1}^{\infty}\Pr\{B>m-1\} \right]  \notag \\
&= \frac{\pi_{(1,0)}}{pq_1}\left[ 1+p(1-q_1)\sum\nolimits_{m=1}^{\infty}\Pr\{B>m-1\} \right]
\end{align}
from which the first probability $\pi_{(1,0)}$ can be determined as
\begin{equation}
\pi_{(1,0)}=\frac{pq_1}{1+p(1-q_1)\sum\nolimits_{m=1}^{\infty}\Pr\{B>m-1\}}
\end{equation}

Define
\begin{equation}
F(p,n)=\begin{cases}
1 & \qquad  (n=1)  \\
(1-p)^{n-1}+\frac{1-q_1}{q_1}\sum\nolimits_{j=0}^{n-2}(1-p)^{j}q_{n-1-j} &  \qquad  (n\geq2)
\end{cases}   \notag
\end{equation}

Thus, according to equation (22), we can write that
\begin{align}
\pi_{(n,0)}=\pi_{(1,0)}F(p,n) =\frac{pq_1F(p,n)}{1+p(1-q_1)\sum\nolimits_{m=1}^{\infty}\sum\nolimits_{l=m}^{\infty}q_l } \qquad (n\geq1)
\end{align}

The other probabilities $\pi_{(n,m)}$ can be determined by using recursive expression (18) as
\begin{align}
\pi_{(n,m)}&=\pi_{(n-m,0)}p(1-q_1) \Pr\{B>m-1\}  \notag \\
&=\pi_{(1,0)}F(p,n-m)p(1-q_1) \left(\sum\nolimits_{l=m}^{\infty}q_l \right) \notag \\
&=\frac{p^2q_1(1-q_1)F(p,n-m)\left(\sum\nolimits_{l=m}^{\infty}q_l\right)}{1+p(1-q_1)\sum\nolimits_{m=1}^{\infty}\sum\nolimits_{l=m}^{\infty}q_l }
\end{align}

So far, we completely solve the system of equations (17) and obtain all the stationary probabilities. This completes the proof of Theorem 2.

\section{Proof of Theorem 3}
In this appendix, we solve the system of stationary equations when the queue adopted in system is Ber/Geo/1/1 queue.

It is not difficult to obtain the solutions in Theorem 3 by using the general results (6)-(9). Firstly, from the equation (22) we have
\begin{align}
\pi_{(n,0)}&=\pi_{(1,0)}(1-p)^{n-1} + \frac{(1-q_1)\pi_{(1,0)}}{q_1}\left( \sum\nolimits_{j=0}^{n-2}(1-p)^jq_{n-1-j} \right)  \notag \\
&=\pi_{(1,0)}(1-p)^{n-1} +\frac{(1-\gamma)\pi_{(1,0)}}{\gamma}\left( \sum\nolimits_{j=0}^{n-2}(1-p)^j(1-\gamma)^{n-2-j}\gamma \right)  \notag \\
&=\pi_{(1,0)}\bigg[(1-p)^{n-1} + (1-\gamma)^{n-1}\sum\nolimits_{j=0}^{n-2}\left(\frac{1-p}{1-\gamma}\right)^j\bigg]   \notag \\
&=\pi_{(1,0)}\bigg[(1-p)^{n-1} + \frac{(1-p)^{n-1}(1-\gamma)-(1-\gamma)^n}{\gamma-p}\bigg]  \notag \\
&=\pi_{(1,0)}\frac{(1-p)^n-(1-\gamma)^n}{\gamma-p}
\end{align}

The first probability $\pi_{(1,0)}$ can be determined from equation (24) as
\begin{align}
\pi_{(1,0)}&=\frac{pq_1}{1+p(1-q_1)\sum\nolimits_{m=1}^{\infty}\Pr\{B>m-1\}}  \notag \\
&=\frac{p\gamma}{1+p(1-\gamma)\sum\nolimits_{m=1}^{\infty}\sum\nolimits_{k=m}^{\infty}(1-\gamma)^{k-1}\gamma}  \notag \\
&=\frac{p\gamma}{1+p(1-\gamma)\sum\nolimits_{m=1}^{\infty}(1-\gamma)^{m-1}}  \notag \\
&=\frac{p\gamma}{1+p(1-\gamma)(1/\gamma)} \notag \\
&=\frac{p\gamma^2}{p+\gamma-p\gamma}
\end{align}

Combining (27) and (28), we obtain
\begin{equation}
\pi_{(n,0)}=\frac{p\gamma ^2\left[(1-p)^n - (1-\gamma)^n \right]  }{(p+\gamma -p\gamma)(\gamma -p)}  \qquad (n\geq2)
\end{equation}

Notice that expression (29) is also valid for the case $n=1$.

The other probabilities $\pi_{(n,m)}$ can be found by calculating equation (18) directly. For $n>m\geq1$, we have
\begin{align}
\pi_{(n,m)}&=\pi_{(n-m,0)}p(1-q_1)\Pr\{B>m-1\}  \notag \\
&=\pi_{(n-m,0)}p(1-\gamma)\sum\nolimits_{k=m}^{\infty}(1-\gamma)^{k-1}\gamma \notag \\
&=\pi_{(n-m,0)}p(1-\gamma)^m    \notag \\
&=\frac{(p\gamma)^2\left[(1-p)^{n-m}(1-\gamma)^m-(1-\gamma)^n  \right]}{(p+\gamma-p\gamma)(\gamma-p)}
\end{align}

So far, all the probabilities are determined. This completes the proof of Theorem 3.

\section{Proof of Corollary 3}
The average discrete AoI $\overline{\Delta}^{(d)}_{Ber/Geo/1/1}$ is calculated as
\begin{align}
\overline{\Delta}^{(d)}_{Ber/Geo/1/1}&= \sum\nolimits_{n=1}^{\infty}n\Pr\{\Delta=n\}  \notag \\
&= \frac{p(1-p)\gamma ^3}{(p+\gamma -p\gamma)(\gamma -p)^2}\sum\nolimits_{n=1}^{\infty}\left[n(1-p)^n - n(1-\gamma)^n \right]
- \frac{(p\gamma)^2}{(p+\gamma -p\gamma)(\gamma -p)}\sum\nolimits_{n=1}^{\infty}n^2(1-\gamma)^n  \notag \\
&= \frac{p(1-p)\gamma ^3}{(p+\gamma -p\gamma)(\gamma -p)^2}\frac{(p+\gamma-p\gamma)(\gamma-p)}{p^2\gamma^2}
- \frac{(p\gamma)^2}{(p+\gamma -p\gamma)(\gamma -p)}\frac{p^2(1-\gamma)(2-\gamma)}{\gamma^3}  \notag \\
&= \frac{(1-p)\gamma}{p(\gamma-p)}-\frac{p^2(1-\gamma)(2-\gamma)}{(p+\gamma-p\gamma)(\gamma-p)\gamma} \notag \\
&= \frac{(1-\rho_d\gamma)\gamma}{\rho_d\gamma(\gamma-\rho_d\gamma)}-\frac{\rho_d^2\gamma^2(1-\gamma)(2-\gamma)}{(\rho_d\gamma+\gamma-\rho_d\gamma^2)(\gamma-\rho_d\gamma)\gamma}   \\
&= \frac{1}{\gamma}\left( \frac{1}{\rho_d}+\frac{1-\gamma}{1-\rho_d} - \frac{\rho_d(1-\gamma)}{1-\rho_d}+ \frac{\rho_d(1-\gamma)}{1+(1-\gamma)\rho_d} \right)   \notag \\
&= \frac{1}{\gamma}\left((1-\gamma) + \frac{1}{\rho_d} + \frac{\rho_d}{\frac{1}{1-\gamma}+\rho_d}\right)
\end{align}
where in (29) we substitute $\rho_d=p/\gamma$.

It is easy to see that
\begin{equation}
\overline{\Delta}^{(d)}_{Ber/Geo/1/1}\to \overline{\Delta}_{M/M/1/1}=\frac{1}{\mu}\left(1 + \frac{1}{\rho} + \frac{\rho}{1+\rho}\right)    \notag
\end{equation}
when $(1-\gamma)\to1$, i.e., when the time slot is short enough.

This completes the proof of Corollary 3.

\end{document}